\newif\ifdoublecol

\doublecoltrue
\documentclass[twocolumn]{IEEEtran}
\IEEEoverridecommandlockouts

\usepackage{etex} 

\ifCLASSINFOpdf
  \usepackage[pdftex]{graphicx}

  \usepackage{epstopdf}
\else
  \usepackage[dvips]{graphicx}
  \graphicspath{{../eps/}}
  \DeclareGraphicsExtensions{.eps}
\fi
\usepackage{cite}

\usepackage[cmex10]{amsmath}
\usepackage{xcolor,colortbl}
\definecolor{col1}{HTML}{3891A6}
\definecolor{col2}{HTML}{EF5B5B}
\definecolor{col3}{HTML}{3DDC97}

\usepackage{amsmath,amsthm,relsize}
\usepackage{amsfonts, amssymb, cuted}
\usepackage{mathtools}
\usepackage{algorithm}
\usepackage[noend]{algpseudocode}
\usepackage{cite}
\usepackage{soul}
\usepackage{graphicx}
\usepackage{tikz}
\usepackage{pgfplotstable}
\usepackage{pgfplots}
\usepackage{nicefrac}
\usepackage{enumitem}
\usepackage{support-caption}
\usepackage{subcaption}
\usepackage[font=footnotesize]{caption}
\usepackage{multirow}
\pgfplotsset{compat=1.15}
\usepackage{relsize}
\usetikzlibrary{patterns}
\usepackage{acro}
\usepackage{pifont}
\newtheorem{theorem}{Theorem}

\usepackage{todonotes}

\usepackage{mathtools}

\usepackage{array}
\usepackage{booktabs}

\usepackage{diagbox}
\usepackage{smartdiagram}
\usesmartdiagramlibrary{additions}
\usepackage{bm}
\usepackage{multicol}

\usepackage{tabularx}
\usepackage{colortbl}

\usepackage{hyperref}
\hypersetup{
    colorlinks=true,
    linkcolor=blue,
    citecolor=blue,
    filecolor=blue,
    urlcolor=blue
}

\usetikzlibrary{plotmarks}
  \usetikzlibrary{arrows.meta}
  \usepgfplotslibrary{patchplots}
  \usepackage{grffile}
  \pgfplotsset{plot coordinates/math parser=false}
  \newlength\figureheight
  \newlength\figurewidth
   \pgfplotsset{compat=1.11,
    /pgfplots/ybar legend/.style={
    /pgfplots/legend image code/.code={%
       \draw[##1,/tikz/.cd,yshift=-0.25em]
        (0cm,0cm) rectangle (3em,8pt);},
   },
}

\usepgfplotslibrary{groupplots,units}
\pgfplotsset{
  compat=1.9,
  unit code/.code 2 args={\si{#1#2}} 
}
\usepackage{siunitx}

\usepackage{color}
\usepackage{colortbl}
\usepackage{multirow}

\newlength{\Oldarrayrulewidth}

\definecolor{intnull}{RGB}{213,229,255}
\definecolor{inteins}{RGB}{128,179,255}
\definecolor{intzwei}{RGB}{42,127,255}
\definecolor{intdrei}{RGB}{0,85,212}
\definecolor{intvier}{RGB}{0,51,128}
\definecolor{intfunf}{RGB}{0,34,85}
\usepackage{arydshln} 

\usetikzlibrary{decorations.pathreplacing}
\renewcommand*{\arraystretch}{.4}

\newtheorem{lemma}{Lemma}

\newtheorem{remarknum}{Remark} 
\usepackage{arydshln}

\newcommand{\vbar}{\raisebox{.17ex}{\rule{.04em}{1.35ex}}}
\newcommand{\vbarind}{\raisebox{.01ex}{\rule{.04em}{1.1ex}}}
\newcommand{\R}{\ifmmode{\rm I}\hspace{-.2em}{\rm R} \else ${\rm I}\hspace{-.2em}{\rm R}$ \fi}
\newcommand{\T}{\ifmmode{\rm I}\hspace{-.2em}{\rm T} \else ${\rm I}\hspace{-.2em}{\rm T}$ \fi}
\newcommand{\N}{\ifmmode{\rm I}\hspace{-.2em}{\rm N} \else \mbox{${\rm I}\hspace{-.2em}{\rm N}$} \fi}
\newcommand{\B}{\ifmmode{\rm I}\hspace{-.2em}{\rm B} \else \mbox{${\rm I}\hspace{-.2em}{\rm B}$} \fi}
\newcommand{\Hil}{\ifmmode{\rm I}\hspace{-.2em}{\rm H} \else \mbox{${\rm I}\hspace{-.2em}{\rm H}$} \fi}
\newcommand{\C}{\ifmmode\hspace{.2em}\vbar\hspace{-.31em}{\rm C} \else \mbox{$\hspace{.2em}\vbar\hspace{-.31em}{\rm C}$} \fi}
\newcommand{\Cind}{\ifmmode\hspace{.2em}\vbarind\hspace{-.25em}{\rm C} \else \mbox{$\hspace{.2em}\vbarind\hspace{-.25em}{\rm C}$} \fi}
\newcommand{\Q}{\ifmmode\hspace{.2em}\vbar\hspace{-.31em}{\rm Q} \else \mbox{$\hspace{.2em}\vbar\hspace{-.31em}{\rm Q}$} \fi}
\newcommand{\Z}{\ifmmode{\rm Z}\hspace{-.28em}{\rm Z} \else ${\rm Z}\hspace{-.28em}{\rm Z}$ \fi}

\newtheorem{exmp}{Example}
\theoremstyle{definition}

\newcommand{\CB}[0]{{\mathcal{B}}}
\newcommand{\CC}[0]{{\mathcal{C}}}

\newcommand{\CK}[0]{{\mathcal{K}}}

\newcommand{\CS}[0]{{\mathcal{S}}}
\newcommand{\CT}[0]{{\mathcal{T}}}
\newcommand{\CU}[0]{{\mathcal{U}}}

\newcommand{\CW}[0]{{\mathcal{W}}}
\newcommand{\CX}[0]{{\mathcal{X}}}

\newcommand{\Bh}[0]{{\mathbf{h}}}

\newcommand{\Bw}[0]{{\mathbf{w}}}
\newcommand{\Bx}[0]{{\mathbf{x}}}
\newcommand{\By}[0]{{\mathbf{y}}}
\newcommand{\Bz}[0]{{\mathbf{z}}}

\newcommand{\BH}[0]{{\mathbf{H}}}
\newcommand{\BI}[0]{{\mathbf{I}}}

\newcommand{\BU}[0]{{\mathbf{U}}}

\newcommand{\SfS}[0]{{\mathsf{S}}}
\newcommand{\SfT}[0]{{\mathsf{T}}}

\newcommand{\SfX}[0]{{\mathsf{X}}}

\usepackage{tikz}

\usepackage{tikz}
\usetikzlibrary{arrows,
                calc,
                decorations.pathreplacing,
                calligraphy,
                matrix,
                positioning
                }

\DeclareAcronym{ADMM}{
    short = ADMM,
    long = alternating direction method of multipliers,
    list = Alternating Direction Method of Multipliers,
    tag = abbrev
}

\DeclareAcronym{AoA}{
    short = AoA,
    long = angle-of-arrival,
    list = Angle-of-Arrival,
    tag = abbrev
}

\DeclareAcronym{SISO}{
    short = SISO,
    long = single-input single-output,
    list = single-input single-output,
    tag = abbrev
}

\DeclareAcronym{MRT}{
    short = MRT,
    long = maximum ratio transmitter,
    list = maximum ratio transmitter,
    tag = abbrev
}

\DeclareAcronym{PDA}{
    short = PDA,
    long = placement delivery array,
    list = placement delivery array,
    tag = abbrev
}

\DeclareAcronym{EE}{
    short = EE,
    long = energy efficiency,
    list = energy efficiency,
    tag = abbrev
}

\DeclareAcronym{MDS}{
    short = MDS,
    long = maximum distance separation,
    list = maximum distance separation,
    tag = abbrev
}

\DeclareAcronym{SIC}{
    short = SIC,
    long = successive-interference-cancellation,
    list = successive-interference-cancellation,
    tag = abbrev
}

\DeclareAcronym{MAC}{
    short = MAC,
    long = multiple-access-channel,
    list = multiple-access-channel,
    tag = abbrev
}

\DeclareAcronym{AoD}{
    short = AoD,
    long = angle-of-departure,
    list = Angle-of-Departure,
    tag = abbrev
}

\DeclareAcronym{BB}{
    short = BB,
    long = base band,
    list = Base Band,
    tag = abbrev
}

\DeclareAcronym{BC}{
    short = BC,
    long = broadcast channel,
    list = Broadcast Channel,
    tag = abbrev
}

\DeclareAcronym{BS}{
    short = BS,
    long = base station,
    list = Base Station,
    tag = abbrev
}

\DeclareAcronym{BR}{
    short = BR,
    long = best response,
    list = Best Response, 
    tag = abbrev
}

\DeclareAcronym{CB}{
    short = CB,
    long = coordinated beamforming,
    list = Coordinated Beamforming,
    tag = abbrev
}

\DeclareAcronym{CC}{
    short = CC,
    long = coded caching,
    list = Coded Caching,
    tag = abbrev
}

\DeclareAcronym{CE}{
    short = CE,
    long = channel estimation,
    list = Channel Estimation,
    tag = abbrev
}

\DeclareAcronym{CoMP}{
    short = CoMP,
    long = coordinated multi-point transmission,
    list = Coordinated Multi-Point Transmission,
    tag = abbrev
}

\DeclareAcronym{CRAN}{
    short = C-RAN,
    long = cloud radio access network,
    list = Cloud Radio Access Network,
    tag = abbrev
}

\DeclareAcronym{CSE}{
    short = CSE,
    long = channel specific estimation,
    list = Channel Specific Estimation,
    tag = abbrev
}

\DeclareAcronym{CSI}{
    short = CSI,
    long = channel state information,
    list = Channel State Information,
    tag = abbrev
}

\DeclareAcronym{CSIT}{
    short = CSIT,
    long = channel state information at the transmitter,
    list = Channel State Information at the Transmitter,
    tag = abbrev
}

\DeclareAcronym{CU}{
    short = CU,
    long = central unit,
    list = Central Unit,
    tag = abbrev
}

\DeclareAcronym{D2D}{
    short = D2D,
    long = device-to-device,
    list = Device-to-Device,
    tag = abbrev
}

\DeclareAcronym{DE-ADMM}{
    short = DE-ADMM,
    long = direct estimation with alternating direction method of multipliers,
    list = Direct Estimation with Alternating Direction Method of Multipliers,
    tag = abbrev
}

\DeclareAcronym{DE-BR}{
    short = DE-BR,
    long = direct estimation with best response,
    list = Direct Estimation with Best Response,
    tag = abbrev
}

\DeclareAcronym{DE-SG}{
    short = DE-SG,
    long = direct estimation with stochastic gradient,
    list = Direct Estimation with Stochastic Gradient,
    tag = abbrev
}

\DeclareAcronym{DFT}{
	short = DFT,
	long = discrete fourier transform,
	list = Discrete Fourier Transform,
	tag = abbrev
}

\DeclareAcronym{DoF}{
    short = DoF,
    long = degrees of freedom,
    list = Degrees of Freedom,
    tag = abbrev
}

\DeclareAcronym{DL}{
    short = DL,
    long = downlink,
    list = Downlink,
    tag = abbrev
}

\DeclareAcronym{GD}{
	short = GD, 
	long = gradient descent,
	list = Gradeitn Descent,
	tag = abbrev
}

\DeclareAcronym{IBC}{
    short = IBC,
    long = interfering broadcast channel,
    list = Interfering Broadcast Channel,
    tag = abbrev
}

\DeclareAcronym{i.i.d.}{
    short = i.i.d.,
    long = independent and identically distributed,
    list = Independent and Identically Distributed,
    tag = abbrev
}

\DeclareAcronym{JP}{
    short = JP,
    long = joint processing,
    list = Joint Processing,
    tag = abbrev
}

\DeclareAcronym{KKT}{
    short = KKT,
    long = Karush-Kuhn-Tucker,
    tag = abbrev
}

\DeclareAcronym{LOS}{
	short = LOS,
	long = line-of-sight,
	list = Line-of-Sight,
	tag = abbrev
}

\DeclareAcronym{LS}{
    short = LS,
    long = least squares,
    list = Least Squares,
    tag = abbrev
}

\DeclareAcronym{LTE}{
    short = LTE,
    long = Long Term Evolution,
    tag = abbrev
}

\DeclareAcronym{LTE-A}{
    short = LTE-A,
    long = Long Term Evolution Advanced,
    tag = abbrev
}

\DeclareAcronym{MIMO}{
    short = MIMO,
    long = multiple-input multiple-output,
    list = Multiple-Input Multiple-Output,
    tag = abbrev
}

\DeclareAcronym{MISO}{
    short = MISO,
    long = multiple-input single-output,
    list = Multiple-Input Single-Output,
    tag = abbrev
}

\DeclareAcronym{MSE}{
    short = MSE,
    long = mean-squared error,
    list = Mean-Squared Error,
    tag = abbrev
}

\DeclareAcronym{MMSE}{
    short = MMSE,
    long = minimum mean-squared error,
    list = Minimum Mean-Squared Error,
    tag = abbrev
}

\DeclareAcronym{mmWave}{
	short = mmWave,
	long = millimeter wave,
	list = Millimeter Wave,
	tag = abbrev
}

\DeclareAcronym{MU-MIMO}{
    short = MU-MIMO,
    long = multi-user \ac{MIMO},
    list = Multi-User \ac{MIMO},
    tag = abbrev
}

\DeclareAcronym{OTA}{
    short = OTA,
    long = over-the-air,
    list = Over-the-Air,
    tag = abbrev
}

\DeclareAcronym{PSD}{
    short = PSD,
    long = positive semidefinite,
    list = Positive Semidefinite,
    tag = abbrev
}

\DeclareAcronym{QoS}{
	short = QoS,
	long = quality of service,
	list = Quality of Service,
	tag = abbrev
}

\DeclareAcronym{RCP}{
	short = RCP,
	long = remote central processor,
	list = Remote Central Processor,
	tag = abbrev
}

\DeclareAcronym{RRH}{
    short = RRH,
    long = remote radio head,
    list = Remote Radio Head,
    tag = abbrev
}

\DeclareAcronym{RSSI}{
    short = RSSI,
    long = received signal strength indicator,
    list = Received Signal Strength Indicator,
    tag = abbrev
}

\DeclareAcronym{RX}{
	short = RX,
	long = receiver,
	list = Receiver,
	tag = abbrev
}

\DeclareAcronym{SCA}{
    short = SCA,
    long = successive convex approximation,
    list = Successive Convex Approximation,
    tag = abbrev
}

\DeclareAcronym{SG}{
    short = SG,
    long = stochastic gradient,
    list = Stochastic Gradient,
    tag = abbrev
}

\DeclareAcronym{SNR}{
    short = SNR,
    long = signal-to-noise ratio,
    list = Signal-to-Noise Ratio,
    tag = abbrev
}

\DeclareAcronym{SINR}{
    short = SINR,
    long = signal-to-interference-plus-noise ratio,
    list = Signal-to-Interference-plus-Noise Ratio,
    tag = abbrev
}

\DeclareAcronym{SOCP}{
	short = SOCP, 
	long = second order cone program,
	list = Second Order Cone Program,
	tag = abbrev
}

\DeclareAcronym{SSE}{
    short = SSE,
    long = stream specific estimation,
    list = Stream Specific Estimation,
    tag = abbrev
}

\DeclareAcronym{SVD}{
	short = SVD,
	long = singular value decomposition,
	list = Singular Value Decomposition,
	tag = abbrev
}

\DeclareAcronym{TDD}{
	short = TDD,
	long = time division duplex,
	list = Time Division Duplex,
	tag = abbrev
}

\DeclareAcronym{TX}{
	short = TX,
	long = transmitter,
	list = Transmitter,
	tag = abbrev
}

\DeclareAcronym{UE}{
    short = UE,
    long = user equipment,
    list = User Equipment,
    tag = abbrev
}

\DeclareAcronym{UL}{
    short = UL,
    long = uplink,
    list = Uplink,
    tag = abbrev
}

\DeclareAcronym{ULA}{
	short = ULA,
	long = uniform linear array,
	list = Uniform Linear Array,
	tag = abbrev
}

\DeclareAcronym{UPA}{
    short = UPA,
    long = uniform planar array,
    list = Uniform Planar Array,
    tag = abbrev
}

\DeclareAcronym{WMMSE}{
    short = WMMSE,
    long = weighted minimum mean-squared error,
    list = Weighted Minimum Mean-Squared Error,
    tag = abbrev
}

\DeclareAcronym{WMSEMin}{
    short = WMSEMin,
    long = weighted sum \ac{MSE} minimization,
    list = Weighted sum \ac{MSE} Minimization,
    tag = abbrev
}

\DeclareAcronym{WBAN}{
	short = WBAN,
	long = wireless body area network,
	list = Wireless Body Area Network,
	tag = abbrev
}

\DeclareAcronym{WSRMax}{
    short = WSRMax,
    long = weighted sum rate maximization,
    list = Weighted Sum Rate Maximization,
    tag = abbrev
}

\usepackage{makecell}

 \usepackage{booktabs}

\begin{document}

\title{Low-Subpacketization MIMO Coded Caching with Flexible Stream Allocation}




\author{\IEEEauthorblockN{Mohammad NaseriTehrani,~\IEEEmembership{Student~Member,~IEEE,}  
MohammadJavad Salehi,~\IEEEmembership{Member,~IEEE,}~and Antti T\"olli},~\IEEEmembership{Senior~Member,~IEEE}
\thanks{
The authors are affiliated with the University of Oulu, Finland.Emails: 
\{mohammad.naseritehrani, mohammadjavad.salehi, antti.tolli\}@oulu.fi.
}
}



\maketitle


\begin{abstract}

Subpacketization remains a major obstacle to the practical deployment of coded caching (CC) in multi-antenna wireless networks. In this paper, we propose a low-complexity multiple-input multiple-output (MIMO) CC scheme that enables flexible delivery rate adaptation 
while substantially reducing subpacketization requirements. The proposed design builds on a virtual decomposition of the broadcast channel and extends the shared-cache model to multi-antenna receivers, enabling adaptive selection of feasible user and stream configurations and thereby providing explicit control over the spatial multiplexing gain under linear decodability constraints. 
Analytical results show that the proposed framework can asymptotically approach the best-known achievable degrees of freedom (DoF) under linear decodability constraints  while requiring orders-of-magnitude lower subpacketization than existing schemes. Numerical evaluations further demonstrate that this flexibility yields notable throughput improvements at practical signal-to-noise ratios.

\end{abstract}

\begin{IEEEkeywords}
\noindent coded caching, multicasting, MIMO communications, Degrees of freedom
\end{IEEEkeywords}

\section{Introduction}

Coded caching~(CC) has emerged as a powerful technique for alleviating the growing traffic demand in wireless networks by exploiting the storage capabilities of user devices~\cite{maddah2014fundamental}. 
By enabling coded multicast opportunities during delivery, CC creates global caching gains that scale with the aggregate cache size across users, thereby improving spectral efficiency. While originally developed for single-input single-output~(SISO) broadcast channels, CC has been successfully extended to multi-antenna systems. In multi-input single-output~(MISO) setups, it has been shown that spatial multiplexing and global caching gains can be jointly leveraged to increase the achievable degrees of freedom~(DoF)~\cite{shariatpanahi2018physical,shariatpanahi2016multi}. Furthermore, finite-SNR transmission design and precoder optimization have recently attracted increasing attention in MISO CC systems~\cite{tolli2017multi}, highlighting the importance of practical performance considerations beyond DoF optimality.

Beyond MISO systems, extending CC to multi-input multi-output~(MIMO) networks introduces additional design opportunities and implementation challenges. Early works characterized the fundamental DoF limits of cache-aided MIMO systems~\cite{cao2017fundamental,cao2019treating}, while more recent studies developed low-complexity transmission schemes for large transmit-side antenna configurations~\cite{salehi2021MIMO}. In particular, our recent works~\cite{tehrani2024enhanced,naseritehrani2026cache,naseritehrani2026asymmetric} established improved achievable DoF bounds for cache-aided MIMO communications under linear decodability constraints.

A major challenge in practical CC implementations is the rapidly increasing subpacketization level, which reflects the number of file segments required for cache placement and delivery. In classical CC schemes, subpacketization grows exponentially with the number of users, thereby severely limiting practical scalability. Several low-subpacketization designs have been proposed for MISO CC systems, including shared-cache schemes~\cite{lampiris2018adding,parrinello2019fundamental} and cyclic constructions~\cite{salehi2020lowcomplexity}. These approaches preserve the optimal DoF performance of MISO CC while significantly reducing subpacketization. However, they mainly focus on achieving the maximum possible DoF, rather than enabling the transmission scheme to be tuned for optimized finite-SNR performance. Moreover, they do not exploit the receive-side spatial multiplexing capabilities available in more general MIMO settings.

In this paper, we develop a new MIMO CC framework for low-subpacketization, flexible finite-SNR operation. The proposed design is based on a virtual broadcast-channel decomposition approach, where the original cache-aided MIMO broadcast channel is first mapped to an equivalent virtual MISO~(V-MISO) network. Placement and delivery are then designed in the virtual domain and subsequently elevated back to the original MIMO setting. The reduction in subpacketization is achieved through a group-aligned cache placement strategy inspired by the shared-cache paradigm, under which users within the same virtual group are assigned identical cache contents. This structure enables flexible tuning of both the number of simultaneously served users and the number of spatial streams per user within a restricted feasible parameter set, reducing the search space compared with~\cite{naseritehrani2026cache} while providing explicit control over the spatial multiplexing gain at finite SNR and substantially reducing subpacketization requirements. Furthermore, the achievable DoF can asymptotically approach the best known achievable DoF under linear decodability constraints~\cite{naseritehrani2026cache} yielding a practical finite-SNR MIMO coded caching framework with flexible spatial multiplexing and reduced implementation complexity.

\textbf{Notations.} Bold upper- and lower-case letters denote matrices and vectors, respectively. $[\Bh_k]_{k\in\CK}$ and $[\BH_k]_{k\in\CK}$ denote the matrices formed by the horizontal concatenation of vectors $\{\Bh_k\}_{k\in\CK}$ and matrices $\{\BH_k\}_{k\in\CK}$, respectively. Calligraphic letters represent sets, $|\CK|$ denotes the cardinality of set $\CK$, and $\CK \backslash b$ denotes the set of all elements of $\CK$ except $b$. For an integer $K$, $[K] \triangleq \{1,2,\cdots,K\}$.

\section{System Model}
\label{section:sys_model}

We consider a cache-aided MIMO broadcast channel in which a base station (BS) with $L$ transmit antennas serves $K$ users (UEs), each with $G$ receive antennas. The system operates over a library $\CW$ of $N$ equal-sized files $\{W_1,\dots,W_N\}$. Each UE has a cache of size $M$ files. The global caching gain is defined as $t = \frac{KM}{N}$, representing the aggregate number of library copies stored across all UE caches. We assume $K \ge t+1$.
The system operation consists of a cache placement phase followed by a delivery phase. During placement, fragments of the files in $\CW$ are stored in UE caches without knowledge of future requests. At the beginning of the delivery phase, each UE~$k \in [K]$ reveals its requested file index $d_k$ to the BS. Delivery is organized as a sequence of multicast transmissions, each serving a UE subset $\CX \subseteq [K]$ of size $\Omega$, where $\Omega \in [t+1, K]$ is a design parameter controlling the number of simultaneously served UEs.

Let $\SfX$ denote the collection of UE subsets $\CX$ scheduled for transmission by the delivery algorithm. For each $\CX \in \SfX$, the BS transmits a signal vector $\Bx(\CX) \in \mathbb{C}^{L}$. The received signal at UE~$k \in \CX$ is
\begin{equation}
\By_k(\CX) = \BH_k(\CX)\mathbf{x}(\CX) + \Bz_k(\CX),
\end{equation}
where $\BH_k(\CX) \in \mathbb{C}^{G \times L}$ denotes a block-fading channel matrix with independent and identically distributed complex Gaussian entries of unit variance, and $\Bz_k(\CX) \sim \mathcal{CN}(\mathbf{0}, N_0\BI)$ is additive noise. Perfect channel state information is assumed at the transmitter. Each served UE~$k \in \CX$ applies a linear receive beamforming matrix $\BU_k(\CX) \in \mathbb{C}^{G \times \beta}$ to decode $\beta$ spatial streams, where $\beta \le G$. The parameters $\Omega$ and $\beta$ jointly determine the spatial multiplexing configuration and constitute the main design variables of the proposed scheme.

To enable multicast transmission, each file fragment may be further partitioned into multiple subpackets in the delivery phase. Let $\Theta$ denote the resulting subpacketization level after both placement and delivery operations. Each transmission vector $\Bx(\CX)$ delivers a set of subpackets in parallel with an achievable max–min rate $R(\CX)$ (file/s), such that all UEs in $\CX$ can successfully decode their requested subpackets. The corresponding transmission time is $T(\CX) = \frac{1}{\Theta R(\CX)}$. Defining the total delivery time as $T_{\text{total}} = \sum_{\CX \in \SfX} T(\CX)$, the symmetric rate is
$R_{\text{sym}} = \frac{K}{T_{\text{total}}}$.
\section{The Proposed Low-Subpacketization Design}
\label{sec:virtual_bc_decomp}

To construct a low-subpacketization, flexible MIMO CC scheme, we propose a virtual broadcast-channel decomposition framework. The key idea is to first map the original cache-aided MIMO broadcast channel to an equivalent V-MISO network, design the placement and delivery in that virtual domain, and then elevate the resulting construction back to the original MIMO setting. The reduction in subpacketization is achieved through a group-aligned cache placement strategy inspired by the shared-cache paradigm, under which virtual users (V-UEs) within the same virtual group are assigned identical cache contents.



\begin{theorem}
\label{thm:main_virtual}
For any pair of positive integers $(\Omega,\beta)$ satisfying
\begin{equation}
t+1 \le \Omega \le t+L, \qquad \beta \le G, \qquad (\Omega - t)\beta \le L ,
\label{eq:ld_condition_clean_n}
\end{equation}
if the integer conditions
\begin{equation}
\frac{K}{\Omega-t} \in \mathbb{Z}_+,\qquad
\frac{t}{\Omega-t} \in \mathbb{Z}_+,
\label{eq:integer_conditions_n}
\end{equation}
are met, there exists a linearly decodable MIMO CC scheme, that achieves a DoF of $\Omega \times \beta$ per transmission with subpacketization level
\begin{equation}
\label{eq:subpacketization_main}
\Theta(\Omega,\beta)
=
\beta \binom{\frac{K}{\Omega-t}}{\frac{t}{\Omega-t}} .
\end{equation}
\end{theorem}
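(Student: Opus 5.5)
Write $\alpha \triangleq \Omega - t$. I would build the scheme by reduction to a shared-cache MISO design with $\alpha$ antennas, followed by an \emph{elevation} step that accounts for the receive antennas.

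\textbf{Step 1: grouping and placement.} Partition the $K$ UEs into $K/\alpha$ groups of $\alpha$ UEs each; this is an integer by \eqref{eq:integer_conditions_n}. Every UE in a group is given the same cache content. Since all members of a group hold identical caches, the group behaves like a single ``shared cache.'' The resulting virtual network has $K' = K/\alpha$ caches with caching gain $t' = K'M/N = t/\alpha$, also an integer by \eqref{eq:integer_conditions_n}. On this network I would apply the MAN placement: split each file into $\binom{K/\alpha}{t/\alpha}$ subfiles $W_{n,\CT}$, indexed by $\CT\subseteq[K/\alpha]$ with $|\CT|=t/\alpha$, and store $W_{n,\CT}$ at every UE of every group in $\CT$. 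Each UE then caches a fraction $\binom{K'-1}{t'-1}/\binom{K'}{t'} = t'/K' = M/N$ of each file, so the cache size constraint holds.

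\textbf{Step 2: delivery in the V-MISO domain.} For delivery I would follow the shared-cache MISO scheme of \cite{lampiris2018adding,parrinello2019fundamental} with $\alpha$ virtual transmit antennas. Pick $t'+1$ groups. For each selection of one UE per group (equivalently, running over $\alpha$ rounds in which one UE of every group is served), form the XOR-type multicast of MAN messages across the $t'+1$ groups. Spatial multiplexing by a factor $\alpha$ then serves all $\alpha$ members of each group simultaneously. As a result, each transmission serves $(t'+1)\alpha = t+\alpha = \Omega$ UEs.

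Linear decodability follows from the standard argument. Each UE removes, using its cache, the messages intended for the other $t'$ groups. It must then null the interference from the other $\alpha-1$ members of its own group, together with the messages for other groups that it does not cache. The precoder $\mathbf{w}$ for each message is chosen to lie in the null space of the effective channels of those $\alpha-1$ same-group UEs. This requires the precoder dimension to exceed $\alpha-1$ interfering users.

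\textbf{Step 3: elevation to MIMO.} Each UE decodes $\beta$ streams using $\BU_k(\CX)$. I would treat each UE as $\beta$ virtual single-antenna users. These have effective channels $\BU_k^H\BH_k$ when $\beta\le G$, and generic channels make them linearly independent. The subfile destined to a UE is further split into $\beta$ parts, one per stream, which gives the factor $\beta$ in \eqref{eq:subpacketization_main}. Within a transmission, a given stream at UE $k$ must be protected from the $(\alpha-1)\beta$ streams of its group mates and from its own other $\beta-1$ streams. Zero-forcing this requires $\alpha\beta\le L$, which is exactly \eqref{eq:ld_condition_clean_n}. Each transmission therefore delivers $\Omega\beta$ streams in parallel, giving DoF $\Omega\beta$.

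\textbf{Step 4: bookkeeping.} Finally, I would verify that every requested subpacket is delivered and count the subpacketization as $\Theta=\beta\binom{K/\alpha}{t/\alpha}$, without any extra delivery-phase split beyond $\beta$.

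The main obstacle is this last accounting. In shared-cache delivery each subfile $W_{d_k,\CT}$ is sent exactly once, to a specific UE in a specific transmission. Checking that the round structure needs no further splitting, unlike MAN-MISO delivery, requires arguing carefully that the multicast index set over groups uses each $(k,\CT)$ pair once. I would attempt this by mapping transmissions bijectively to pairs (group subset $\mathcal{S}$ of size $t'+1$, round index).
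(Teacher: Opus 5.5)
Your proposal is correct and follows essentially the same construction as the paper: group-level MAN placement over $K/(\Omega-t)$ shared-cache groups, then a V-MISO delivery in which each UE's subfile is zero-forced at its $\Omega-t-1$ group mates and all other-group terms are cancelled from the cache, then an elevation step splitting each subfile into $\beta$ streams with the dimension count $(\Omega-t)\beta\le L$, $\beta\le G$. The bookkeeping you flag as the main obstacle is immediate and needs no round index: each transmission is indexed by the group set $\CS$ alone, UE $k\in\CU_p$ receives $W_{d_k}^{\CS\setminus p}$ from it, and $\CS\mapsto\CS\setminus p$ maps the sets containing $p$ bijectively onto exactly the subfiles UE $k$ lacks, so every missing subfile is sent once with no split beyond $\beta$.
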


\begin{proof}
Let $(\Omega,\beta)$ satisfy the theorem assumptions in~\eqref{eq:ld_condition_clean_n} and~\eqref{eq:integer_conditions_n}, and define $\gamma \triangleq \frac{M}{N}$ and $\Delta \triangleq \Omega-t$.
By~\eqref{eq:integer_conditions_n}, both $P = \frac{K}{\Delta}$ and $P\gamma = \frac{t}{\Delta}$ are integers. The proof proceeds by constructing the placement and delivery algorithms. Linear decodability, achievable DoF, and subpacketization are then established. For the proof, we assume zero-forcing (ZF) beamformers are used.

\noindent $\bullet$ \textbf{Placement algorithm.} It consists of two steps.

\textit{1) User grouping.}
Partition UEs into $P$ disjoint groups
\begin{equation}
\CU_p = \{\ell P + p : \ell = 0,1,\dots,\Delta-1\},
\qquad p \in [P],
\end{equation}
so that each group contains exactly $\Delta$ UEs.

\textit{2) Cache placement.}
All UEs within the same group share identical cache contents. Define the set of cache-profile indices
\begin{equation}
\SfT = \left\{ \CT \subseteq [P] : |\CT| = P\gamma \right\}.
\end{equation}
Each file $W_n$ is split into $|\SfT|=\binom{P}{P\gamma}$ subfiles $W_n = \{ W_n^{\CT} : \CT \in \SfT \}$, and for each $p \in [P]$, all UEs $k \in \CU_p$ are assigned the cache
\begin{equation}
\CC_k = \CC_{\CU_p} = \{ W_n^{\CT} : p \in \CT,\ n \in [N] \}.
\label{eq:cache_group}
\end{equation}
This reduces the effective placement dimension from $K$ individual UEs to $P = \frac{K}{\Delta}$ cache groups, which is the main source of subpacketization reduction~\cite{parrinello2019fundamental,lampiris2018adding}.

\noindent $\bullet$ \textbf{Delivery algorithm.} It consists of three steps.

\textit{1) V-MISO network setup.}
Consider a V-MISO network with $K$ single-antenna V-UEs and a virtual BS (V-BS) with $\Delta$ antennas. Assume the cache contents and the file requested by V-UE~$k$ of the V-MISO network are the same with those of the UE~$k$ of the original MIMO network.

\textit{2) V-MISO transmission.}
Define the collection of active cache profile sets as
$\SfS = \left\{ \CS \subseteq [P] : |\CS| = P\gamma + 1 \right\}$.
Each $\CS \in \SfS$ corresponds to the V-UE subset
\begin{equation}
\CX(\CS) \triangleq \bigcup\nolimits_{p \in \CS} \CU_p .
\end{equation}
Since $|\CU_p|=\Delta$ and $|\CS|=P\gamma+1$, we obtain
$|\CX(\CS)| = \Delta(P\gamma+1)=t+\Delta=\Omega$,
i.e., each $\CS$ induces a V-UE subset of size $\Omega$. For each $\CS$, we create the V-MISO transmission
\begin{equation}
\hat{\Bx}(\CS) = \sum_{p \in \CS}\;\sum_{k \in \CU_p} W_{d_k}^{\CS\setminus p}\,
\Bw_{\CU_p\setminus k},
\label{eq:vmiso_tx_n}
\end{equation}
to deliver one stream to every V-UE $k \in \CX(\CS)$. In~\eqref{eq:vmiso_tx_n}, the virtual beamforming vector $\Bw_{\CU_p\setminus k} \in \mathbb{C}^{\Delta \times 1}$ lies in the null space of the stacked virtual channel matrix of the remaining V-UEs in the same group, i.e., 
\begin{equation}
\hat{\BH}_{\CU_p\setminus k} = \big[\hat{\Bh}_{k'}^{H}\big]_{k' \in \CU_p\setminus k}
\in \mathbb{C}^{(\Delta-1)\times \Delta} ,
\end{equation}
where $\hat{\Bh}_{k'} \in \mathbb{C}^{\Delta \times 1}$ represents the virtual channel between V-UE~$k'$ and the V-BS. Taking a closer look at~\eqref{eq:vmiso_tx_n}, intra-group interference is nulled out by ZF beamforming, while the remaining inter-group interference is removed using the cached side information created during the placement phase (for each V-UE \(k\in\CU_p\), all terms intended for other groups correspond to subfiles whose indices contain $p$, and are therefore available in the cache of V-UE $k$).

\textit{2) Elevation to the MIMO channel.}
Each V-MISO transmission $\hat{\Bx}(\CS)$ is elevated to a real transmission vector for the original MIMO setup. In this regard, each virtual stream for a V-UE~$k$ in~\eqref{eq:vmiso_tx_n} is replaced by $\beta$ parallel streams for UE~$k$ in the MIMO network. Specifically, each subfile $W_{d_k}^{\CS\setminus p }$ is partitioned into $\beta$ equal-sized subpackets $W_{d_k}^{\CS\setminus p, (q)}$, $q\in[\beta]$, and the elevated MIMO transmission becomes
\begin{equation}
\Bx(\CX(\CS)) = \sum_{p \in \CS}\;\sum_{k \in \CU_p} \; \sum_{q=1}^{\beta}
W_{d_k}^{\CS\setminus p , (q)}\,
\Bw_{\CU_p\setminus k}^{(q)},
\label{eq:mimo_lift_tx_n}
\end{equation}
where $\Bw_{\CU_p\setminus k}^{(q)} \in \mathbb{C}^{L}$ denotes the ZF beamformer associated with the $q$-th stream (more explanation is provided below).

\noindent $\bullet$ \textbf{Linear decodability.}
Each beamformer $\Bw_{\CU_p\setminus k}^{(q)}$ in~\eqref{eq:mimo_lift_tx_n} is designed such that the stream $W_{d_k}^{\CS\setminus p,(q)}$ is nulled toward the remaining $\Delta-1$ UEs in the same active group after receive combining. This results in the elevated nulling matrix
\begin{equation}
\BH_{\CU_p\setminus k} = \big[\BU_{k'}^{H}(\CX(\CS))\BH_{k'}(\CX(\CS))\big]_{k' \in \CU_p\setminus k}.
\end{equation}
It can be easily seen that $\BH_{\CU_p\setminus k} \in \mathbb{C}^{(\Delta-1)\beta \times L}$.
To suppress the $(\Delta-1)\beta$ interfering streams while delivering $\beta$ desired streams, the transmit signal must lie in a subspace of dimension at least $\beta$, which requires
\[
L - (\Delta-1)\beta \ge \beta
\quad \Longleftrightarrow \quad
(\Omega - t)\beta \le L .
\]
Furthermore, each receiver must be able to resolve $\beta$ desired spatial streams, which requires $\beta \le G$. These conditions coincide with those in~\eqref{eq:ld_condition_clean_n}.

\noindent $\bullet$ \textbf{Achievable DoF.}
Each induced subset $\CX(\CS)$ represents $\Omega$ UEs, and each served UE decodes $\beta$ streams. Hence, the achievable DoF per transmission is
\begin{equation}
\mathrm{DoF}(\Omega,\beta) = \Omega \times \beta .
\end{equation}

\noindent $\bullet$ \textbf{Subpacketization.}
The placement phase splits each file into
\begin{equation}
|\SfT| = \binom{P}{P\gamma} = \binom{\frac{K}{\Delta}}{\frac{t}{\Delta}}
\end{equation}
subfiles. The MIMO elevation step introduces an additional factor $\beta$, since each virtual stream is further partitioned into $\beta$ subpackets. Therefore, the total subpacketization level is
\begin{equation}
\Theta(\Omega,\beta)
=
\beta \binom{\frac{K}{\Delta}}{\frac{t}{\Delta}},
\end{equation}
and the proof is complete.
\end{proof}

\begin{remarknum}
\label{subpacketization_ratio}
For large \(K\) and moderate $P$ (i.e., $P \ge 10$), if \(\Omega\) and $\beta$ are fixed
and $\Delta >1$, the subpacketization of the proposed scheme is exponentially smaller than that of the MIMO-CC scheme in~\cite{naseritehrani2026cache} by a factor of:
\begin{equation}
\label{eq:subpack_ratio}
\begin{aligned}
\frac{\beta \binom{P}{P\gamma}}
{\beta \binom{K}{t}\binom{K-t-1}{\Omega-t-1}}
\approx
\mathcal{O}\left(
K^{-(\Delta-1)}
\,2^{-K\left(1-\frac{1}{\Delta}\right)H(\gamma)}
\right),
\end{aligned}
\end{equation}
where $H(\gamma)$ is the binary entropy function. 
The ratio in~\eqref{eq:subpack_ratio} is calculated by using approximations
\begin{equation*}
    \begin{aligned}
        \binom{K}{K\gamma}
\approx
\frac{2^{K H(\gamma)}}{\sqrt{2\pi K\gamma(1-\gamma)}}, \quad \binom{P}{P\gamma}
\approx
\frac{2^{P H(\gamma)}}{\sqrt{2\pi P\gamma(1-\gamma)}},
    \end{aligned}
\end{equation*}
which follow the Stirling approximation $n! \approx\sqrt{2\pi n}(\frac{n}{e})^n$ and are valid for moderate $K$ and $P$ values. Moreover, we have used the asymptotic approximation 
\begin{equation*}
    \binom{K-K\gamma-1}{\Delta-1} \approx \frac{(K-K\gamma)^{\Delta-1}}{(\Delta-1)!}.
\end{equation*}
\end{remarknum}

\begin{exmp}\label{exmp1}
Consider a MIMO-CC network with $K=24$, $L=13$, $G=2$, and $\gamma = 0.5$ (hence, $t=12$). For this network, a subset of $(\Omega, \beta)$ pairs satisfying the conditions of Theorem~\ref{thm:main_virtual} and their respective subpacketization requirement with the proposed solution are shown in Table~\ref{tab:dof_theta_by_Omega_beta}. Without loss of generality, let us focus on $(\Omega,\beta)=(18,2)$, with $\mathrm{DoF}=36$. 

\noindent\textbf{Cache placement:} Users are split into $P=K/(\Omega-t)=4$ disjoint groups of size $\Omega-t=6$,
as shown in Figure~\ref{fig:ISIT_sysmo}. 
Given $P\gamma=2$, we split each file $W_n$ into $\binom{P}{P\gamma}=6$ subfiles, i.e., $W_n \rightarrow \{ W_n^{\{1,2\}}, \cdots ,W_n^{\{3,4\}}\}$.  
Now, defining
\begin{equation*}
    \begin{array}{l}
    C_1=\{ W_n^{\{1,2\}}, \cdots ,W_n^{\{1,4\}}\},\: C_2=\{ W_n^{\{1,2\}}, \cdots ,W_n^{\{2,4\}}\},\\
    C_3=\{ W_n^{\{1,3\}}, \cdots ,W_n^{\{3,4\}}\},\: C_4=\{ W_n^{\{1,4\}}, \cdots ,W_n^{\{3,4\}}\},
    \end{array}
\end{equation*}
we put the subfiles in set $\CC_p$ in the cache memory of each user in group $\CU_p$.

\noindent\textbf{Data delivery:} Let us consider the profile set $\CS=\{1,2,3\}$. The V-MISO transmission vector for this set is given as:
\begin{equation*}
\begin{aligned}
\hat\Bx(\CX({\{1,2,3\}})) &= W_{d_{1}}^{\{2,3\}} \Bw_{\CU_1 \setminus 1}+\cdots +W_{d_{21}}^{\{2,3\}} \Bw_{\CU_1 \setminus 21}\\
&+ W_{d_{2}}^{\{1,3\}} \Bw_{\CU_2 \setminus 2}+\cdots +W_{d_{22}}^{\{1,3\}} \Bw_{\CU_2 \setminus 22}\\
 &+ W_{d_{3}}^{\{1,2\}} \Bw_{\CU_3 \setminus 3}+\cdots +W_{d_{23}}^{\{1,2\}} \Bw_{\CU_3 \setminus 23}.
\end{aligned}
\end{equation*}
Substituting each virtual stream in $\hat\Bx(\CX({\{1,2,3\}}))$ with $\beta=2$ streams, the real MIMO transmission vector is given as:
\begin{equation*}
\begin{aligned}
\Bx&(\CX({\{1,2,3\}})) = 
W_{d_{1}}^{\{2,3\},(1)} \Bw_{\CU_1 \setminus 1}^{(1)}+W_{d_{1}}^{\{2,3\},(2)} \Bw_{\CU_1 \setminus 1}^{(2)}\\
&\qquad +\cdots + W_{d_{21}}^{\{2,3\},(1)} \Bw_{\CU_1 \setminus 21}^{(1)}+W_{d_{21}}^{\{2,3\},(2)} \Bw_{\CU_1 \setminus 21}^{(2)}\\
& \qquad+ \cdots+W_{d_{3}}^{\{1,2\},(1)} \Bw_{\CU_3 \setminus 3}^{(1)}+W_{d_{3}}^{\{1,2\},(2)} \Bw_{\CU_3 \setminus 3}^{(2)} \\
&\qquad+\cdots + W_{d_{23}}^{\{1,2\},(1)} \Bw_{\CU_3 \setminus 23}^{(1)}+W_{d_{23}}^{\{1,2\},(2)} \Bw_{\CU_3 \setminus 23}^{(2)}.
\end{aligned}
\end{equation*}

\end{exmp}

\begin{figure}[t]
    \centering
    \includegraphics[width=0.75\columnwidth]{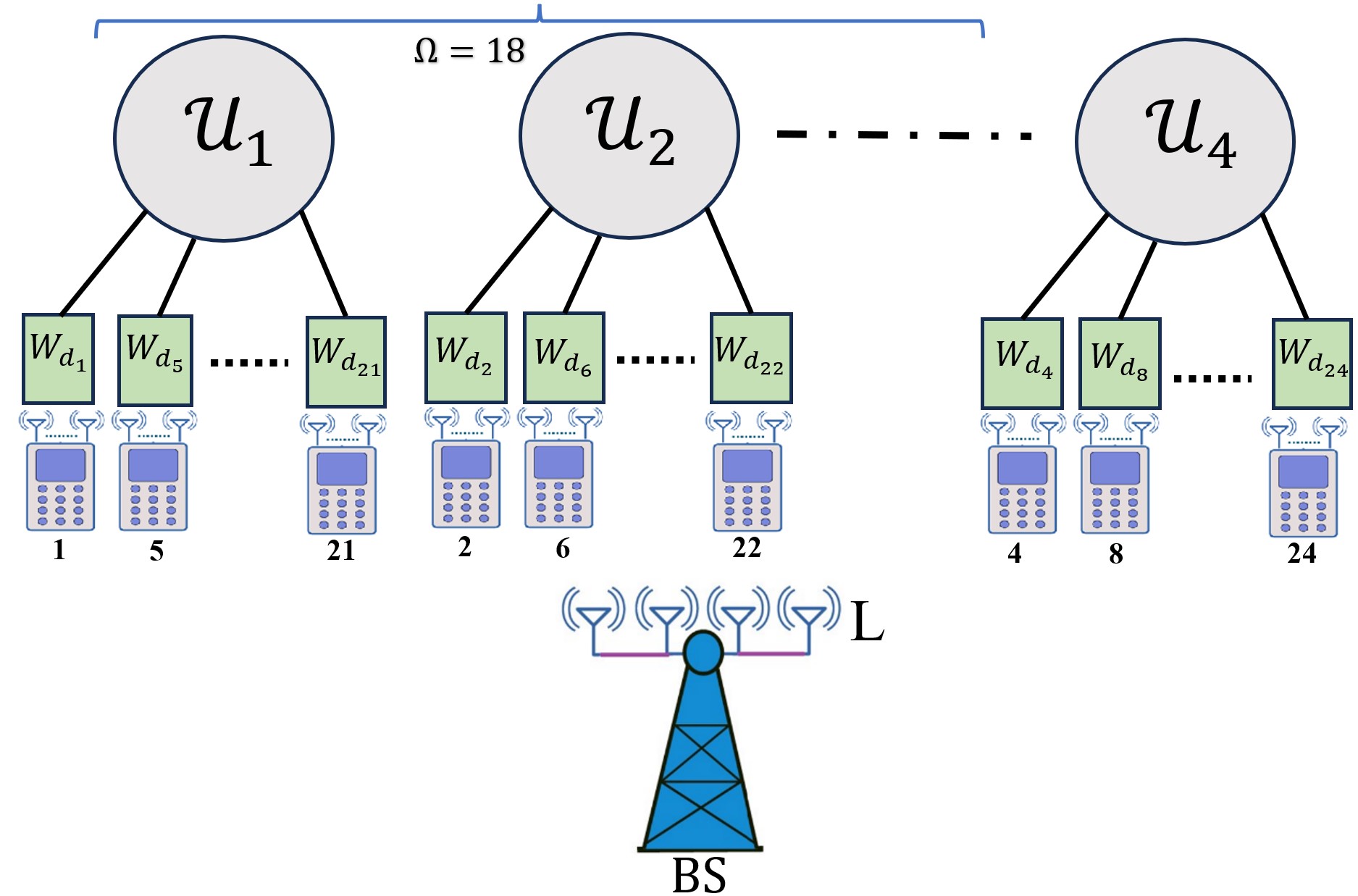} 
    \caption{MIMO setup of Example~\ref{exmp1} with $K=24$ users assigned to $P=4$ cache profiles, and $\Omega=18$ users served in each transmission.}
    \label{fig:ISIT_sysmo}
\end{figure}

\begin{table}[t]
\centering
\caption{Pairs of $(\Omega, \beta)$ satisfying Theorem~\ref{thm:main_virtual} conditions for the network in Example~\ref{exmp1}, and their corresponding subpacketization.}
\footnotesize
\setlength{\tabcolsep}{3.5pt}
\renewcommand{\arraystretch}{1.15}
\resizebox{.83\columnwidth}{!}{%
\begin{tabular}{|c|
c:c| c:c| c:c| c:c| c:c|}
\hline
\diagbox[dir=NW,width=5.8em,height=2.2em]{$\beta$}{\raisebox{-0.2em}{$\Omega$}}
& \multicolumn{2}{c|}{14}
& \multicolumn{2}{c|}{15}
& \multicolumn{2}{c|}{16}
& \multicolumn{2}{c|}{18}
& \multicolumn{2}{c|}{24} \\
\cline{2-11}
& $\Omega \beta$ & $\Theta$
& $\Omega \beta$ & $\Theta$
& $\Omega \beta$ & $\Theta$
& $\Omega \beta$ & $\Theta$
& $\Omega \beta$ & $\Theta$ \\
\hline
1 & 14 & 924  & 15 & 70  & 16 & 20 & 18 & 6  & 24 & 2 \\
\hdashline
2 & 28 & 1848 & 30 & 140 & 32 & 40 & 36 & 12 & -- & -- \\
\hline
\end{tabular}%
}
\label{tab:dof_theta_by_Omega_beta}
\end{table}

\begin{remarknum}[Non-integer parameter extension]
\label{rem:noninteger_extension}
The integer constraints in~\eqref{eq:integer_conditions_n} can be relaxed using a standard combination of phantom-user padding (cf.~\cite{lampiris2018adding}) and memory sharing. Specifically, the system can first be padded to $\bar K = \Delta\left\lceil \frac{K}{\Delta} \right\rceil$ UEs so that $\frac{\bar K}{\Delta} \in \mathbb{Z}_+$. Memory sharing can then be applied between the two nearest integer values of the corresponding placement parameter $P\gamma$. This preserves the same virtual decomposition and MIMO elevation structure, while incurring only a vanishing loss in DoF as $K$ grows. The resulting generalized subpacketization remains of the same order as~\eqref{eq:subpacketization_main}.
\end{remarknum}


Let \(\CB\) denote the set of all feasible \((\Omega,\beta)\) pairs satisfying~\eqref{eq:ld_condition_clean_n} and~\eqref{eq:integer_conditions_n}. The maximum achievable DoF of the proposed low-subpacketization scheme is given as $\Omega^* \!\times\! \beta^*$, where 
\begin{equation}
    (\Omega^*,\beta^*) = \arg\max_{(\Omega,\beta) \in \CB} \Omega \times \beta.
\end{equation}

\begin{lemma}
\label{lem:dof_gap}
For a given MIMO-CC setup, if $L \ge G$, the gap between the largest DoF values achievable by the proposed scheme and the scheme in~\cite{naseritehrani2026cache} is at most $L-1$. In other words, if we show the optimized number of users served per transmission and the number of streams per user for the MIMO-CC scheme in~\cite{naseritehrani2026cache} by $\Omega_{\mathrm{opt}}$ and $\beta_{\mathrm{opt}}$, respectively, then
\begin{equation}
\label{eq:dof_gap}
    \Omega_{\mathrm{opt}} \times \beta_{\mathrm{opt}} - \Omega^* \times \beta^* \le L-1.
\end{equation}
\end{lemma}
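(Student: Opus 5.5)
The approach is to compare the optimum of~\cite{naseritehrani2026cache} with one specific feasible point of $\CB$, namely the point with $\Delta = 1$. For $\Delta=1$ the integer conditions in~\eqref{eq:integer_conditions_n} hold automatically, since $K/1$ and $t/1$ are positive integers. This choice therefore avoids any divisibility issue. Since $\Omega^*\times\beta^*$ is the maximum over $\CB$, it is at least the DoF of any feasible pair. So it is enough to find one pair $(\Omega,\beta)\in\CB$ with $\Omega_{\mathrm{opt}}\beta_{\mathrm{opt}} - \Omega\beta \le L-1$.

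\textbf{Step 1: recall the optimum of~\cite{naseritehrani2026cache}.} Under the same linear decodability constraints, the scheme in~\cite{naseritehrani2026cache} serves $\Omega_{\mathrm{opt}}$ users with $\beta_{\mathrm{opt}}$ streams each, subject to $\beta_{\mathrm{opt}}\le G$ and $(\Omega_{\mathrm{opt}}-t)\beta_{\mathrm{opt}} \le L$, with no divisibility requirement. The second constraint gives $\Omega_{\mathrm{opt}} \le t + \lfloor L/\beta_{\mathrm{opt}}\rfloor$ (possibly also capped by $K$, which only helps). Hence
\begin{equation*}
\Omega_{\mathrm{opt}}\beta_{\mathrm{opt}} \le \beta_{\mathrm{opt}} t + \beta_{\mathrm{opt}}\lfloor L/\beta_{\mathrm{opt}}\rfloor \le \beta_{\mathrm{opt}} t + L .
\end{equation*}
I expect this step to be the main point to verify. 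It requires confirming that the constraint set in~\cite{naseritehrani2026cache} indeed implies $(\Omega_{\mathrm{opt}}-t)\beta_{\mathrm{opt}}\le L$ and $\beta_{\mathrm{opt}}\le G$. If its constraints are instead stated in terms of total streams, I would rederive the same bound $\Omega_{\mathrm{opt}}\beta_{\mathrm{opt}}\le t\beta_{\mathrm{opt}}+L$ from its ZF dimension count. That count says each stream must be nulled at the other $(\Omega-t-1)\beta$ interfering streams, and this leaves at least $\beta$ free dimensions only if the bound holds.

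\textbf{Step 2: exhibit the comparison point.} Take $(\Omega,\beta)=(t+1,\beta_{\mathrm{opt}})$. I check~\eqref{eq:ld_condition_clean_n} in turn:
\begin{itemize}
\item $t+1\le\Omega\le t+L$ holds trivially, since $L\ge 1$.
\item $\beta_{\mathrm{opt}}\le G$ holds by Step 1.
\item $(\Omega-t)\beta = \beta_{\mathrm{opt}} \le G \le L$ holds by the hypothesis $L\ge G$. This is exactly where that assumption is used.
\end{itemize}
In addition, $\Omega = t+1\le K$ because $K\ge t+1$ by assumption. So $(t+1,\beta_{\mathrm{opt}})\in\CB$, and therefore $\Omega^*\beta^* \ge (t+1)\beta_{\mathrm{opt}}$.

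\textbf{Step 3: combine.} Putting the two bounds together gives
\begin{equation*}
\Omega_{\mathrm{opt}}\beta_{\mathrm{opt}}-\Omega^*\beta^* \le \beta_{\mathrm{opt}}t + L - (t+1)\beta_{\mathrm{opt}} = L-\beta_{\mathrm{opt}} \le L-1,
\end{equation*}
where the last inequality uses $\beta_{\mathrm{opt}}\ge 1$. This establishes~\eqref{eq:dof_gap}.
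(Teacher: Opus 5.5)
Your overall structure matches the paper's: upper-bound the baseline DoF, then use the always-feasible $\Delta=1$ point $\Omega=t+1$. Your Step~2 is fine. The paper uses $(t+1,G)$ together with $\beta_{\mathrm{opt}}\le G$, you use $(t+1,\beta_{\mathrm{opt}})$; either works, and both invoke $L\ge G$ at the same place.

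The gap is Step~1. The scheme of~\cite{naseritehrani2026cache} is \emph{not} constrained by $(\Omega_{\mathrm{opt}}-t)\beta_{\mathrm{opt}}\le L$. Its linear decodability condition (Theorem~1 there) is $L-\beta_{\mathrm{opt}}(\Omega_{\mathrm{opt}}-t-1)\ge \lceil \beta_{\mathrm{opt}}/\binom{\Omega_{\mathrm{opt}}-1}{t}\rceil$. After the $(\Omega_{\mathrm{opt}}-t-1)\beta_{\mathrm{opt}}$ nulling constraints, the baseline therefore needs only $\lceil \beta_{\mathrm{opt}}/\binom{\Omega_{\mathrm{opt}}-1}{t}\rceil$ free dimensions, not $\beta_{\mathrm{opt}}$. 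That quantity equals $1$ whenever $\binom{\Omega_{\mathrm{opt}}-1}{t}\ge\beta_{\mathrm{opt}}$. Your fallback ``ZF count leaving $\beta$ free dimensions'' reproduces the proposed scheme's constraint, not the baseline's.

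The paper's own numbers refute your bound $\Omega_{\mathrm{opt}}\beta_{\mathrm{opt}}\le t\beta_{\mathrm{opt}}+L$:
\begin{itemize}
\item For $(L,G,t)=(13,2,12)$ the baseline attains $(19,2)$, i.e.\ DoF $38>37$.
\item For $(L,G,t,K)=(13,6,3,80)$ it attains $6\times 6=36>31$. Here your Step~3 would give a gap of at most $L-\beta_{\mathrm{opt}}=7$, but the actual gap is $36-24=12$.
\end{itemize}
So the argument as written ``proves'' a stronger statement that is false.

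The repair is to use only $\lceil\cdot\rceil\ge 1$. This gives $\beta_{\mathrm{opt}}(\Omega_{\mathrm{opt}}-t-1)\le L-1$, i.e.\ $\Omega_{\mathrm{opt}}\beta_{\mathrm{opt}}\le (t+1)\beta_{\mathrm{opt}}+L-1$. Combined with your Step~2 bound $\Omega^*\beta^*\ge(t+1)\beta_{\mathrm{opt}}$, this yields exactly $L-1$. Going from $L-\beta_{\mathrm{opt}}$ to $L-1$ is exactly the extra room the baseline gets from the weaker right-hand side. That is why the bound is attained in the paper's $K=80$ example.
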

\begin{proof}
    In~\cite[Theorem 1]{naseritehrani2026cache}, it is shown that for the MIMO-CC scheme to be linearly decodable, 
    we should have $\beta_{\mathrm{opt}} \le G$ and
    \begin{equation}
        L-\beta_{\mathrm{opt}}(\Omega_{\mathrm{opt}}-t-1) \ge \left\lceil \frac{\beta_{\mathrm{opt}}}{\binom{\Omega_{\mathrm{opt}}-1}{t}} \right\rceil.
    \end{equation}
    As $\lceil \beta_{\mathrm{opt}}/\binom{\Omega_{\mathrm{opt}}-1}{t} \rceil \ge 1$, the latter condition can be equivalently written as $\beta_{\mathrm{opt}}(\Omega_{\mathrm{opt}}-t-1) \le L-1$, which yields
    \begin{equation}
    \label{eq:omega_beta_DoF_opt}
    \begin{aligned} 
        &\Omega_{\mathrm{opt}} \times \beta_{\mathrm{opt}} \le L-1 + \beta_{\mathrm{opt}}(t+1) \le L-1+Gt+G .
    \end{aligned}
    \end{equation}
    On the other hand, for the proposed scheme, it can be easily verified that $(\Omega = t+1, \beta = G)$ is always a feasible point as it satisfies all the conditions in Theorem~\ref{thm:main_virtual}. This results in
    \begin{equation}
        \Omega^* \times \beta^* \ge G(t+1) = Gt+G,
    \end{equation}
    which, together with~\eqref{eq:omega_beta_DoF_opt} results in~\eqref{eq:dof_gap}. 
\end{proof}

It should be noted that the bound in Lemma~\ref{lem:dof_gap} can be tight. In particular, this can happen when the largest DoF of the proposed scheme occurs at $\Omega^* = t+1$ (which is the case, for example, when \(\gcd(K,t)=1\)). 
For instance, in a setup with $L=13$, $G=6$, $\gamma=0.0375$, $K=80$, and $t=3$, the best DoF values achievable by the MIMO-CC scheme~\cite{naseritehrani2026cache} and our proposed scheme are $\Omega_{\rm opt}\times \beta_{\rm opt}=6\times 6$ and $\Omega_{}^*\times \beta_{}^*=4\times 6$, respectively. Hence, the gap is \(36-24=12=L-1\).
%

Nevertheless, as shown in Section~\ref{section:Simulations}, the DoF gap is much smaller than \(L-1\) in most cases, and can even vanish.

\begin{remarknum}[Beamformer design]
The proof of Theorem~\ref{thm:main_virtual} employs ZF beamformers to establish the achievable DoF in a tractable manner through spatial dimension counting. In practical finite-SNR operation, however, the transmit precoders and receive combiners can be further optimized by solving a max--min rate problem based on the resulting stream-wise SINR expressions, following approaches such as~\cite{tolli2017multi,naseritehrani2026cache}. 
\end{remarknum}

\section{Numerical Examples}
\label{section:Simulations}

%
%

We use numerical simulations to analyze the performance and complexity of the proposed MIMO-CC scheme. For comparison, we consider three baselines: the \textit{DoF-optimized} scheme in~\cite{naseritehrani2026cache}, which enables flexible DoF selection and achieves a large maximum DoF at the cost of extremely large subpacketization, the \textit{Lin-MIMO} scheme in~\cite{salehi2021MIMO}, which attains a large DoF value with linearly growing subpacketization but is applicable only under restrictive system parameters (i.e., $\frac{L}{G} \ge t$ and integer $\frac{L}{G}$), and a conventional multi-user MIMO (\textit{MU-MIMO}) scheme, where caching provides only local gain and users are served in a round-robin fashion.

Table~\ref{tab:theta_dof_low_vs_legacy} compares the subpacketization of the proposed scheme with the DoF-optimized baseline. The results highlight a key advantage of the proposed design: it enables flexible selection of the operating point $(\Omega,\beta)$, and hence the achievable DoF, while maintaining a dramatically lower subpacketization level. More specifically, for similar DoF values, the proposed scheme reduces the subpacketization by several orders of magnitude compared to the DoF-optimized scheme. For example, achieving DoF values in the range $28$–$36$ requires subpacketization on the order of $10^2$–$10^3$ in the proposed scheme, whereas the DoF-optimized scheme requires subpacketization on the order of $10^8$–$10^9$. This gap makes the DoF-optimized design impractical even for moderate system sizes, while the proposed scheme remains implementable and scalable. Indeed, for the setup in Table~\ref{tab:theta_dof_low_vs_legacy}, the reduced subpacketization comes at the cost of slightly reduced maximum DoF: with the DoF-optimized design, the achievable DoF can be as large as 38 (with $\Omega=19$ and $\beta=2$). However, with the proposed scheme, it is capped at 36.

\begin{table}[b]
\centering
\vspace{1.6mm}
\caption{Subpacketization comparison, proposed vs DoF-optimized schemes, $(L,G,\gamma) = (13,2,0.5)$, and $K=24$.
}
\label{tab:theta_dof_low_vs_legacy}
\vspace{-0mm}
\footnotesize
\setlength{\tabcolsep}{6pt}
\renewcommand{\arraystretch}{1.15}
\setlength{\arrayrulewidth}{0.4pt}
\resizebox{.81\columnwidth}{!}{%
\begin{tabular}{|c|c|c|}
\hline
\multicolumn{1}{|c|}{\diagbox[width=10.4em]{\hspace{-2mm}$(\Omega,\beta)\!\!\to\!\!\mathrm{DoF}$}{\vspace{1mm}$\Theta$}}
& \textbf{Proposed} & \textbf{DoF-optimized~\cite{naseritehrani2026cache}} \\
\hline
$(14,1)\!\to\!14$ & $924$               & $2.97\times 10^{7}$ \\
\hline
$(15,1)\!\to\!15$ & $70$                & $1.49\times 10^{8}$ \\
\hline
$(16,1)\!\to\!16$ & $20$                & $4.46\times 10^{8}$ \\
\hline
$(18,1)\!\to\!18$ & $6$                 & $1.25\times 10^{9}$ \\
\hline
$(24,1)\!\to\!24$ & $2$                 & $2.70\times 10^{6}$ \\
\hline
$(14,2)\!\to\!28$ & $1848$ & $5.95\times 10^{7}$ \\
\hline
$(15,2)\!\to\!30$ & $140$               & $2.97\times 10^{8}$ \\
\hline
$(16,2)\!\to\!32$ & $40$                & $8.92\times 10^{8}$ \\
\hline
$(18,2)\!\to\!36$ & $12$                & $2.50\times 10^{9}$ \\
\hline
\end{tabular}
}
\end{table}
Fig.~\ref{fig:Plot_SOTA_MC_UC_MC_MS_WSA_t2} illustrates the impact of operating-point selection on the finite-SNR performance of the proposed scheme. Optimized beamformers are employed in all simulations.
%
As observed, in the practical SNR regime, configurations with lower DoF can outperform higher-DoF operating points. This is because selecting smaller values of $(\Omega,\beta)$ reduces the number of simultaneously transmitted streams, thereby relaxing the interference suppression constraints and allowing more effective transmit and receive beamformer design. The resulting improvement in the effective SINR leads to higher symmetric rates despite the lower nominal DoF. In contrast, at high SNR, the performance becomes increasingly dominated by the DoF, and operating points with larger $(\Omega,\beta)$ achieve higher rates, as expected from asymptotic analysis. These results highlight a key advantage of the proposed scheme: the ability to adapt the operating point to the SNR regime, thereby achieving superior finite-SNR performance while still retaining high-DoF scalability. This finite-SNR advantage is also reflected in the comparison with the MU-MIMO scheme, where the proposed design consistently achieves higher symmetric rates over the considered SNR range.


Finally, Fig.~\ref{fig:Tx_side_scaling_DoF_bitsig_unconstrained} compares the maximum achievable DoF of the proposed scheme with MU-MIMO, DoF-optimized, and Lin-MIMO schemes as the number of users $K$ increases, under explicit subpacketization constraints.
Three cases are compared: case~(a) with $\Theta \le 10^{4}$, case~(b) with $\Theta \le 10^{6}$, and case~(c) with unconstrained subpacketization.
As can be seen, with unconstrained subpacketization, the DoF of the proposed scheme closely follows the DoF-optimized design, confirming the result in Lemma~\ref{lem:dof_gap}. On the other hand, 
the Lin-MIMO scheme, while benefiting from linearly increasing subpacketization, is limited by the applicability condition $\frac{L}{G} \ge t$ and the integer constraint on $\frac{L}{G}$. To overcome the first constraint for large $K$ values, we need to implicitly split the system into multiple sub-networks, to reduce the effective $K$, and hence $t$. The second constraint is resolved by reducing the transmitter-side spatial multiplexing gain to $\alpha \le L$. Both tricks, however, affect the achievable DoF of the Lin-MIMO design, as is apparent in Fig.~\ref{fig:Tx_side_scaling_DoF_bitsig_unconstrained}. As the subpacketization constraint becomes active, the DoF-optimized approach loses its DoF advantage very quickly, and the proposed design performs better than both competing designs. Under all conditions, schemes benefiting from the CC gain outperform the baseline MU-MIMO solution. Overall, 
the results of Fig.~\ref{fig:Tx_side_scaling_DoF_bitsig_unconstrained} reveal a key scalability advantage of the proposed scheme: it maintains a favorable DoF scaling with $K$ while respecting the subpacketization constraints, owing to its flexible, low-complexity operating-point design. 
This demonstrates that the proposed framework achieves a superior trade-off between DoF scalability and implementation complexity, making it well-suited for large-scale systems under practical constraints.
 \begin{figure}[t]
    \centering

   \resizebox{.9\columnwidth}{!}{%

    \begin{tikzpicture}

    \begin{axis}
    [
    axis lines = center,
    xlabel near ticks,
    xlabel = \smaller {SNR [dB]},
    ylabel = \smaller {Symmetric Rate [files/s]},
    ylabel near ticks,
    ymin = 0,
    xmax = 30,
    legend style={
        nodes={scale= 0.99, transform shape},
        at={(0,1)}, 
        anchor=north west,
        draw=black, 
        outer sep=2pt, 
        font=\smaller, 
    },
    ticklabel style={font=\smaller},
    grid=both,
    major grid style={line width=.2pt,draw=gray!30},
    ]

        \addplot 
    [dashed, line width=0.8pt, mark = pentagon, mark size=4.4,  mark options={solid, fill=cyan!100} , cyan!100]
    table[y=MUMIMOomg13beta1,x=SNR]{Figs/Symmetric_rate_Prop_SOTA_Comparison.tex};
    \addlegendentry{\footnotesize MU-MIMO}

      \addplot 
    [, line width=0.8pt, mark = diamond, mark size=4.4,  mark options={solid, fill=black!70} , black!70]
    table[y=SCMIMOCComg24beta1,x=SNR]{Figs/Symmetric_rate_Prop_SOTA_Comparison.tex};
    \addlegendentry{\footnotesize $\Omega=24$, $\beta=1$}

    \addplot
    [dashed, line width=0.8pt,mark = square, mark size=4.0 ,  mark options={solid, fill=olive!100}, olive!100]
    table[y=SCMIMOCComg18beta1,x=SNR]{Figs/Symmetric_rate_Prop_SOTA_Comparison.tex};
    \addlegendentry{\footnotesize $\Omega=18$, $\beta=1$} 

     \addplot
    [dashed, line width=0.8pt,mark = asterisk, mark size=4.4,  mark options={solid, fill=blue!100}, blue!100]
    table[y=SCMIMOCComg18beta2,x=SNR]{Figs/Symmetric_rate_Prop_SOTA_Comparison.tex};
    \addlegendentry{\footnotesize $\Omega=18$, $\beta=2$} 

   \addplot
    [dashed, line width=0.8pt,mark = +, mark size=4.4,  mark options={solid, fill=red!100}, red!100]
    table[y=SCMIMOCComg16beta2,x=SNR]{Figs/Symmetric_rate_Prop_SOTA_Comparison.tex};
    \addlegendentry{\footnotesize $\Omega=16$, $\beta=2$} 
    
     \addplot
    [dash dot, line width=0.8pt, mark = x, mark size=4.4,  mark options={solid, fill=magenta!100} , magenta!100]
    table[y=SCMIMOCComg15beta2,x=SNR]{Figs/Symmetric_rate_Prop_SOTA_Comparison.tex};
    \addlegendentry{\footnotesize $\Omega=15$, $\beta=2$}

    \addplot
    [dash dot, line width=0.8pt, mark = triangle, mark size=4.4,  mark options={solid, fill=black!90} , black!90]
    table[y=SCMIMOCComg14beta2,x=SNR]{Figs/Symmetric_rate_Prop_SOTA_Comparison.tex};
    \addlegendentry{\footnotesize $\Omega=14$, $\beta=2$}

    \end{axis}

    \end{tikzpicture}
    }
    \caption{Symmetric rate the proposed scheme, $(L,G,\gamma)=(13,2,0.5)$.
    }
    \label{fig:Plot_SOTA_MC_UC_MC_MS_WSA_t2}
\end{figure}


\begin{figure}[t]
    \centering

    \resizebox{.92\columnwidth}{!}{%

    \begin{tikzpicture}

    \begin{axis}
    [ 
      empty line=none,
    axis lines = center,
    xlabel near ticks,
    xlabel = \smaller {$K$},
    ylabel = \smaller {Achievable DoF},
    ylabel near ticks,
    ymin =8,
    ymax = 127,
    xmin = 19,  
    xmax = 180,
    xtick={20,40,60,80,100,120,140,160,180,200,220},
    ytick={10,20,30,40,50,60,70,80,90,100,110,120,140},
    legend style={
    nodes={scale=0.58, transform shape},
    at={(.415,0.73)}, 
    anchor=south, 
    draw=black,
    outer sep=2pt,
    font=\normalsize,
    legend columns=2, 
    scale=0.7, 
    text height=3.0ex, text depth=.25ex 
    },
    ticklabel style={font=\smaller},
    grid=both,
    minor tick num = 4, 
    major grid style = {line width=0.35pt,draw=gray!35},
    minor grid style = {line width=0.15pt,draw=gray!18}
    ]

    \addplot
    [dash dot,mark = +, line width=1.pt, mark size=4.4,mark options={solid, fill=orange!100!black} , orange!100]
    table[y=MIMOCC,x=K]{Figs/uncosntrained_MIMO_CC_L16_G6_gamma_0.1.tex};
    \addlegendentry{\normalsize DoF-optimized (c) 
    };

  \addplot
    [dash dot,mark = diamond, line width=1.pt, mark size=5.1,mark options={solid, fill=orange!100!black} , orange!100]
    table[y=SCMIMOCC,x=K]{Figs/uncosntrained_MIMO_CC_L16_G6_gamma_0.1.tex};
    \addlegendentry{\normalsize Proposed (c) 
    };

    \addplot
    [mark = 10-pointed star, line width=.7pt, mark size=4.4,mark options={solid, fill=blue!60} , blue!60]
    table[y=SCMIMOCC,x=K]{Figs/sig_Theta_max_1e4.tex};
    \addlegendentry{\normalsize Proposed (a) 
    };
    
    \addplot
    [mark = pentagon, line width=.7pt, mark size=4.4,mark options={solid, fill=blue!60} , blue!60]
    table[y=SCMIMOCC,x=K]{Figs/sig_Theta_max_1e6.tex};
    \addlegendentry{\normalsize Proposed (a) 
    };




   \addplot
    [mark = triangle, line width=.7pt, mark size=4.6 ,mark options={solid, fill=red!100}, red!100]
    table[y=MUMIMO,x=K]{Figs/sig_Theta_max_3e3.tex};  
    \addlegendentry{\normalsize MU-MIMO 
    };

    \addplot
    [mark = square, line width=.7pt, mark size=4.6,mark options={solid, fill=magenta!100} , magenta!100]
    table[y=MIMOCCWSA,x=K]{Figs/sig_Theta_max_1e6.tex};
    \addlegendentry{\normalsize Lin-MIMO (a,b)~\cite{salehi2021MIMO} 
    };

    \addplot
    [mark = o, line width=.7pt, mark size=4.4,mark options={solid, fill=black!80}, black!100]
    table[y=MIMOCC,x=K]{Figs/sig_Theta_max_1e4.tex};
    \addlegendentry{\normalsize DoF-optimized (a)~\cite{naseritehrani2026cache} 
    };


        \addplot
    [mark = star, line width=.7pt, mark size=4.4,mark options={solid, fill=black!80}, black!100]
    table[y=MIMOCC,x=K]{Figs/sig_Theta_max_1e6.tex};
    \addlegendentry{\normalsize DoF-optimized (b)~\cite{naseritehrani2026cache}
    };







    
    

    \end{axis}

    \end{tikzpicture}
    }
    \vspace{-1.5mm}
    \caption{DoF scaling versus $K$, under subpacketization limits: (a) $\Theta < 10^{4}$, (b) $\Theta < 10^{6}$, and (c) unconstrained $\Theta$. $(L,G,\gamma) = (16,6,0.1)$.
    }
    \label{fig:Tx_side_scaling_DoF_bitsig_unconstrained}
 
\end{figure}

\section{Conclusion}
\label{sec:conclusions}
A low-subpacketization MIMO CC framework with flexible operating-point selection is proposed. By extending the shared-cache paradigm to multi-antenna receivers through a virtual broadcast-channel decomposition, the proposed scheme significantly reduces subpacketization while preserving linear decodability and broad applicability across system configurations.
The results demonstrate that, under practical subpacketization constraints, the proposed design achieves improved finite-SNR performance and favorable DoF scalability compared to existing approaches. 

\bibliographystyle{IEEEtran}
\bibliography{conf_short,IEEEabrv,references,whitepaper}
 
\end{document}